\newtheorem{thm}{Theorem}
\newtheorem{prop}{Proposition}
\newtheorem{lem}{Lemma}
\newcommand{\DD}{\mathcal{D}\mkern-12mu\mathcal{D}}
\theoremstyle{remark}
\newtheorem{remark}{Remark}
\title{Algebraic Limits of Sandpiles}
\author{Mikhail Shkolnikov}
\date{}
\address{
Institute of Mathematics and Informatics, Bulgarian Academy of Sciences}
\email{m.shkolnikov@math.bas.bg}
\begin{document}

\begin{abstract}
    The paper contributes to building algebraic foundations of self-organized criticality answering a previously unsolved question about the limiting structure of the extended sandpile group as well as relating it to another limit at the level of classical sandpile groups with respect to certain monomorphisms, and puts forward a concept of canonical sandpile epimorphisms, drawing unexpected consequences about the divisibility properties of the numbers of spanning trees for grid foldings.
\end{abstract}

\maketitle

\section{Introduction}
What we now call ``sandpile model'' has appeared multiple times in diverse guises throughout mathematics and physics, but has acquired a tremendous attention of a large scientific community only when it was proposed as a prototype for self-organized criticality \cite{bak1987self}. The name of the model used in the present note was coined by Deepak Dhar with an additional adjective ``Abelian'' referring in part to his discovery \cite{dhar1990self} of the fact that the set of recurrent states of the model on a finite domain has a natural structure of an Abelian group. 

One characteristic feature of self-organized critical systems is their scale transcendence, which manifests in the form of various scaling limit theorems such as \cite{pegden2013convergence} or \cite{kalinin2016tropical}. These type of results may be described as geometric or ``visual'' approach to the limiting structure of sandpiles, since the graphical representation of states is under consideration. Alternatively, one may focus on probabilistic properties of the sandpile model in the scaling limit (see for instance \cite{bhupatiraju2017inequalities}) ignoring altogether how the states look like. 

What is going to be discussed next is similar to the latter spirit, but the attention will be turned to the algebraic aspects of the theory, i.e. our goal is to prepare the ground for the abstract understanding of sandpile groups in the limit, ignoring for the moment the intricate structure and stunning beauty of the underlying recurrent configurations. Perhaps, over time, these three approaches -- geometric, probabilistic and algebraic -- will be unified.

\section{Sandpile groups and harmonic functions}
Consider a finite connected undirected graph without loops with a distinguished vertex that we call the sink. Denote by $\Gamma$ the set of non-sink vertices of this graph. A state of the sandpile model is a non-negative integer-valued function on $\Gamma.$ One may think that a value of a state at a vertex represents the number of sand grains. 

The (reduced) Laplacian $\Delta\colon\mathbb{Z}^\Gamma\rightarrow\mathbb{Z}^\Gamma$ is given by the difference of the symmetric adjacency matrix and the diagonal degree matrix with removed row and column corresponding to the sink. Note that the graph is restored from its Laplacian, the absolute value of its determinant is equal to the number of spanning trees on the graph by Kirchhoff's theorem. 

A state $\phi$ is called stable if $\phi(v)$ is less than the degree of $v$ for all $v\in\Gamma.$ If a state is unstable there exist a vertex $v$ where $\phi(v)$ is greater or equal to the degree of $v$ and a toppling (i.e. redistribution of sand grains) $\phi\mapsto\phi+\Delta\delta_v$ can be performed. Here $\delta_v$ is a function on $\Gamma$ equal to $1$ at $v$ and vanishing otherwise. A relaxation is a process of applying topplings until the state becomes stable. The process is always finite and the resulting state $\phi^\circ$ doesn't depend on the order of topplings. 

A stable state is called recurrent if it can be obtained from any other state by adding sand grains and relaxing. The sandpile group $G(\Gamma)$ consists of recurrent states with an operation given by $(\phi,\psi)\mapsto(\phi+\psi)^\circ.$ For a function $f\in\mathbb{Z}^\Gamma$ there exist a unique recurrent state $\phi$ and a function $H\in\mathbb{Z}^\Gamma$ such that $f=\phi+\Delta H.$ This gives an isomorphism $G(\Gamma)\cong\operatorname{Coker}(\Delta)=\mathbb{Z}^\Gamma\slash \Delta(\mathbb{Z}^\Gamma)$ which can be written equivalently in a form of a short exact sequence 
$$0\rightarrow\mathbb{Z}^\Gamma\xrightarrow{\Delta}\mathbb{Z}^\Gamma\rightarrow G(\Gamma)\rightarrow 0,$$ where the notaion means that the left homomorphism is injective (which follows from the non-vanishing of the determinant of $\Delta$), the right one is surjective, and the image and kernel coincide in the middle.

Denote by $\partial\Gamma\subset\Gamma$ the set of vertices adjacent to the sink. We call $\partial\Gamma$ the boundary and $\Gamma^\circ=\Gamma\backslash\partial\Gamma$ the interior of the graph. The extended sandpile group $\widetilde G(\Gamma)$ is derived by allowing the recurrent states to take real non-negative values along $\partial\Gamma$. In other words, there is a short exact sequence $$0\rightarrow\mathbb{Z}^\Gamma\xrightarrow{\widetilde\Delta}\mathbb{Z}^{\Gamma^\circ}\mkern-9mu\oplus\mathbb{R}^{\partial\Gamma}\rightarrow\widetilde G(\Gamma)\rightarrow 0,$$
where $\widetilde\Delta$ is the composition of $\Delta$ with the inclusion of $\mathbb{Z}^\Gamma=\mathbb{Z}^{\Gamma^\circ}\mkern-9mu\oplus\mathbb{Z}^{\partial\Gamma}$ to $\mathbb{Z}^{\Gamma^\circ}\mkern-9mu\oplus\mathbb{R}^{\partial\Gamma}.$ There is another short exact sequence $$0\rightarrow G(\Gamma)\rightarrow\widetilde G(\Gamma)\xrightarrow{\slash\mathbb{Z}} (\mathbb{R}\slash\mathbb{Z})^{\partial\Gamma}\rightarrow 0,$$ where $\slash\mathbb{Z}$ takes the fractional part of the extended recurrent state and $\mathbb{R}\slash\mathbb{Z}$ is the quotient of $\mathbb{R}$ by the additive subgroup $\mathbb{Z}$ which topologically is a circle.  

Let $\Delta^\circ\colon\mathbb{Z}^\Gamma\rightarrow\mathbb{Z}^{\Gamma^\circ}$ be a map given by the composition of $\Delta$ with the restriction to $\Gamma^\circ.$ For an Abelian group $A$ define the group $\mathcal{H}_A(\Gamma)$ of $A$-valued harmonic functions as the kernel of $\Delta^\circ\otimes A.$

Consider a category $\DD$ of connected graphs without loops (possibly infinite, with or without sink) represented by pairs $(\Gamma,\Delta)$ with morphisms given by embeddings $\iota\colon\Gamma_1\rightarrow\Gamma_2$ such that $\Delta_1$ is obtained from $\Delta_2$ by removing rows and columns corresponding to vertices in $\Gamma_2\backslash\iota(\Gamma_1).$ For example, the casual map $\Omega\subset\mathbb{R}^n\mapsto\Omega\cap\mathbb{Z}^n$ is upgraded to a contravariant functor from the category of open sets in $\mathbb{R}^n$ with inclusions as morphisms to $\DD.$ Note that $A^\Gamma$ and $A^{\Gamma^\circ}$ are contravariant functors from $\DD$ to Abelian groups and $\Delta^\circ\otimes A$ is their natural transformation. Therefore, $\mathcal{H}_A(\Gamma)$ is also a contravariant functor, with the induced homomorphisms being simple restrictions.

\begin{prop} The groups $\widetilde G(\Gamma)$ and $\mathcal{H}_{\mathbb{R}\slash\mathbb{Z}}(\Gamma)$ are canonically isomorphic.
\label{prop_isoextsand}
\end{prop}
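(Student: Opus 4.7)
The plan is to construct the isomorphism via the Green's matrix $G:=\Delta^{-1}$, which exists as an $\mathbb{R}$-linear endomorphism of $\mathbb{R}^\Gamma$ because $\det\Delta\neq 0$ (Kirchhoff). This inverse is canonical, so it yields a canonical candidate map and no ancillary choices are needed.

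First I would define $\Phi\colon \widetilde{G}(\Gamma)\to\mathcal{H}_{\mathbb{R}/\mathbb{Z}}(\Gamma)$ by $[\phi]\mapsto [G\phi\bmod\mathbb{Z}]$, where a representative $\phi\in \mathbb{Z}^{\Gamma^\circ}\oplus\mathbb{R}^{\partial\Gamma}$ is viewed inside $\mathbb{R}^\Gamma$ and $G\phi$ is reduced coordinatewise. To see this is well-defined, note that if $\phi=\widetilde\Delta(\psi)$ with $\psi\in\mathbb{Z}^\Gamma$, then $G\phi=\psi\in\mathbb{Z}^\Gamma$, so $[G\phi]=0$ in $(\mathbb{R}/\mathbb{Z})^\Gamma$. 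To see the image is harmonic, compute $\Delta^\circ(G\phi)=(\Delta G\phi)|_{\Gamma^\circ}=\phi|_{\Gamma^\circ}\in\mathbb{Z}^{\Gamma^\circ}$, so $(\Delta^\circ\otimes\mathbb{R}/\mathbb{Z})[G\phi]=0$.

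Next I would establish bijectivity. For injectivity: if $G\phi\in\mathbb{Z}^\Gamma$, set $\psi:=G\phi$; then $\widetilde\Delta(\psi)=\Delta\psi=\phi$, so $[\phi]=0$ in $\widetilde G(\Gamma)$. For surjectivity: given $h\in\mathcal{H}_{\mathbb{R}/\mathbb{Z}}(\Gamma)$, lift arbitrarily to $\tilde h\in\mathbb{R}^\Gamma$ and set $\phi:=\Delta\tilde h\in\mathbb{R}^\Gamma$. Harmonicity of $h$ means precisely $\Delta^\circ\tilde h\in\mathbb{Z}^{\Gamma^\circ}$, i.e.\ $\phi|_{\Gamma^\circ}\in\mathbb{Z}^{\Gamma^\circ}$, so $\phi$ lies in $\mathbb{Z}^{\Gamma^\circ}\oplus\mathbb{R}^{\partial\Gamma}$ and represents a class in $\widetilde G(\Gamma)$ with $G\phi=\tilde h$, hence $\Phi([\phi])=h$. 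Linearity of $\Phi$ is immediate from that of $G$.

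Equivalently, one may phrase the whole argument as a diagram chase in
$$\begin{array}{ccccccccc}
0 & \to & \mathbb{Z}^\Gamma & \xrightarrow{\widetilde\Delta} & \mathbb{Z}^{\Gamma^\circ}\oplus\mathbb{R}^{\partial\Gamma} & \to & \widetilde G(\Gamma) & \to & 0 \\
& & \| & & \downarrow G & & \downarrow \Phi & & \\
0 & \to & \mathbb{Z}^\Gamma & \hookrightarrow & \mathbb{R}^\Gamma & \to & (\mathbb{R}/\mathbb{Z})^\Gamma & \to & 0,
\end{array}$$
which commutes because $G\widetilde\Delta$ is the natural inclusion; the middle column is injective, so $\Phi$ is injective, and the image of $\Phi$ coincides with the set of classes in $(\mathbb{R}/\mathbb{Z})^\Gamma$ admitting a lift $y\in\mathbb{R}^\Gamma$ with $\Delta y\in\mathbb{Z}^{\Gamma^\circ}\oplus\mathbb{R}^{\partial\Gamma}$, i.e.\ with $\Delta^\circ y\equiv 0\pmod{\mathbb{Z}}$, which is exactly $\mathcal{H}_{\mathbb{R}/\mathbb{Z}}(\Gamma)$. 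There is no real obstacle here; the only subtle point is recognizing that the boundary real part of $\phi$ poses no problem because we never needed $G\phi$ to be integral on $\partial\Gamma$, only that $\Delta^\circ$ of it be.
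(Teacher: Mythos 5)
Your proof is correct and is essentially the paper's argument made explicit: the paper applies the snake lemma to the vertical short exact sequences $0\to\mathbb{Z}^\Gamma\to\mathbb{R}^\Gamma\to(\mathbb{R}/\mathbb{Z})^\Gamma\to 0$ and $0\to\mathbb{Z}^{\Gamma^\circ}\oplus\mathbb{R}^{\partial\Gamma}\to\mathbb{R}^\Gamma\to(\mathbb{R}/\mathbb{Z})^{\Gamma^\circ}\to 0$ connected by $\widetilde\Delta$, $\Delta\otimes\mathbb{R}$ and $\Delta^\circ\otimes\mathbb{R}/\mathbb{Z}$, and your map $\Phi$ is exactly the inverse of the resulting connecting homomorphism, resting on the same key input (invertibility of $\Delta\otimes\mathbb{R}$). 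In particular your surjectivity argument is verbatim the construction of that connecting map, so the two proofs coincide in substance.
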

\begin{proof}
The isomorphism is given by the connecting homomorphism arising in the following employment of the snake lemma where for the middle row we use the invertibility of $\Delta\otimes\mathbb{R}:$
\begin{figure}[!h]
\hspace*{-0.8cm}  
\begin{tikzpicture}
\matrix[matrix of math nodes,column sep={75pt,between origins},row
sep={30pt,between origins}](m)
{&  &|[name=m12]| 0 &|[name=m13]| 0 &  &  \\
&|[name=m21]| 0 &|[name=m22]| \mathbb{Z}^\Gamma &|[name=m23]|  \mathbb{Z}^{\Gamma^\circ}\mkern-9mu\oplus\mathbb{R}^{\partial\Gamma} &|[name=m24]| \widetilde G(\Gamma) \\
&|[name=m31]| 0 &|[name=m32]| \mathbb{R}^\Gamma &|[name=m33]| \mathbb{R}^\Gamma &|[name=m34]| 0\\
&|[name=m41]| \mathcal{H}_{\mathbb{R}\slash\mathbb{Z}}(\Gamma) &|[name=m42]| (\mathbb{R}\slash\mathbb{Z})^\Gamma &|[name=m43]| (\mathbb{R}\slash\mathbb{Z})^{\Gamma^\circ}&|[name=m44]|0\\
& &|[name=m52]| 0 &|[name=m53]| 0& \\
};
\draw[->] 
(m12) edge (m22)
(m22) edge (m32)
(m32) edge (m42)
(m42) edge (m52)

(m13) edge (m23)
(m23) edge (m33)
(m33) edge (m43)
(m43) edge (m53)

(m21) edge (m22)
(m22) edge  node[auto] {$\scriptstyle\widetilde\Delta$} (m23)
(m23) edge (m24)

(m31) edge (m32)
(m32) edge  node[auto] {$\scriptstyle\Delta\otimes\mathbb{R}$} (m33)
(m33) edge (m34)

(m41) edge (m42)
(m42) edge  node[auto] {$\scriptstyle\Delta^\circ\otimes\mathbb{R}\slash\mathbb{Z}$} (m43)
(m43) edge (m44)
;

\draw[->]
(m31) edge[dashed] (m41)
(m24) edge[dashed] (m34)
;

\draw[->]
(m41) edge[dashed] (m24)
;

\end{tikzpicture}

\end{figure}

\end{proof}

\begin{remark} In a similar way, one obtains $G(\Gamma)\cong\operatorname{Ker}(\Delta\otimes\mathbb{R}\slash\mathbb{Z}),$ the latter may be thought of as the group of ``strictly harmonic'' circle-valued functions on the graph, i.e. the vanishing of the Laplacian is required both in the interior and on the boundary.
\label{rem_isosp}
\end{remark}

Another application of the snake lemma gives the following.
\begin{prop}The groups of integer-, real- and circle- valued harmonic functions and the group $\operatorname{Coker}\Delta^\circ$ are linked through an exact sequence
$$0\rightarrow\mathcal{H}_{\mathbb{Z}}(\Gamma)\rightarrow\mathcal{H}_{\mathbb{R}}(\Gamma)\rightarrow\mathcal{H}_{\mathbb{R}\slash\mathbb{Z}}(\Gamma)\rightarrow\operatorname{Coker}\Delta^\circ\rightarrow 0.$$
\label{prop_integerrealcirc}
\end{prop}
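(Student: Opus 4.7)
The plan is to mimic the snake-lemma argument of the preceding proposition, with a different rectangular diagram. Since $\mathbb{Z}^\Gamma$ and $\mathbb{Z}^{\Gamma^\circ}$ are free, tensoring the coefficient sequence $0\to\mathbb{Z}\to\mathbb{R}\to\mathbb{R}\slash\mathbb{Z}\to 0$ with each of them yields two short exact sequences, and together with the vertical maps $\Delta^\circ$, $\Delta^\circ\otimes\mathbb{R}$, $\Delta^\circ\otimes\mathbb{R}\slash\mathbb{Z}$ they form a commutative diagram
$$
\begin{array}{ccccccccc}
0&\to&\mathbb{Z}^\Gamma&\to&\mathbb{R}^\Gamma&\to&(\mathbb{R}\slash\mathbb{Z})^\Gamma&\to&0\\
&&\downarrow&&\downarrow&&\downarrow&&\\
0&\to&\mathbb{Z}^{\Gamma^\circ}&\to&\mathbb{R}^{\Gamma^\circ}&\to&(\mathbb{R}\slash\mathbb{Z})^{\Gamma^\circ}&\to&0
\end{array}
$$
to which the snake lemma applies.

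The resulting six-term exact sequence reads
$$0\to\ker\Delta^\circ\to\ker(\Delta^\circ\otimes\mathbb{R})\to\ker(\Delta^\circ\otimes\mathbb{R}\slash\mathbb{Z})\to\operatorname{Coker}\Delta^\circ\to\operatorname{Coker}(\Delta^\circ\otimes\mathbb{R})\to\operatorname{Coker}(\Delta^\circ\otimes\mathbb{R}\slash\mathbb{Z})\to 0,$$
and the three kernels on the left are by definition $\mathcal{H}_\mathbb{Z}(\Gamma)$, $\mathcal{H}_\mathbb{R}(\Gamma)$, $\mathcal{H}_{\mathbb{R}\slash\mathbb{Z}}(\Gamma)$. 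The claim thus reduces to truncating the sequence at $\operatorname{Coker}\Delta^\circ$, which is equivalent to showing that the middle cokernel $\operatorname{Coker}(\Delta^\circ\otimes\mathbb{R})$ vanishes.

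That vanishing is the only substantive point, and it is short. I would factor $\Delta^\circ$ as $\pi\circ\Delta$, with $\pi\colon\mathbb{Z}^\Gamma\to\mathbb{Z}^{\Gamma^\circ}$ the (clearly surjective) restriction to the interior. After tensoring with $\mathbb{R}$, the first factor becomes a bijection because $\det\Delta\neq 0$ (the same fact used in the middle row of the previous snake-lemma diagram), and the second factor is still surjective, so $\Delta^\circ\otimes\mathbb{R}$ is surjective as required. Everything else is formal homological bookkeeping, with no real obstacle; in particular, the resulting connecting homomorphism $\mathcal{H}_{\mathbb{R}\slash\mathbb{Z}}(\Gamma)\to\operatorname{Coker}\Delta^\circ$ admits the natural description of lifting a circle-valued harmonic function to $\mathbb{R}^\Gamma$, applying $\Delta^\circ$, and reducing modulo its image.
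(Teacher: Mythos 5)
Your proposal is correct and is exactly the argument the paper intends: the paper omits the proof with the remark that it is ``another application of the snake lemma,'' and your diagram (the coefficient sequence tensored with the free modules $\mathbb{Z}^\Gamma$ and $\mathbb{Z}^{\Gamma^\circ}$, with vertical maps $\Delta^\circ\otimes{-}$) together with the observation that $\operatorname{Coker}(\Delta^\circ\otimes\mathbb{R})=0$ via the factorization through the invertible $\Delta\otimes\mathbb{R}$ is precisely the intended filling-in of that gap.
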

Lie groups $\mathcal{H}_{\mathbb{R}}(\Gamma)$ and $\mathcal{H}_{\mathbb{R}\slash\mathbb{Z}}(\Gamma)$ both have dimension equal to $|\partial\Gamma|$ and $\operatorname{Coker}\Delta^\circ$ is discrete. Therefore, since $\mathcal{H}_{\mathbb{R}}(\Gamma)$ is connected (being an $\mathbb{R}$-vector space), the image of $\mathcal{H}_{\mathbb{R}}(\Gamma)$ in $\mathcal{H}_{\mathbb{R}\slash\mathbb{Z}}(\Gamma)$ is identified with the maximal connected subgroup of $\widetilde G(\Gamma)$ that we denote by $\widetilde G_0(\Gamma).$ We observe the isomorphisms $\widetilde G_0(\Gamma)\cong \mathcal{H}_{\mathbb{R}}(\Gamma)\slash\mathcal{H}_{\mathbb{Z}}(\Gamma)$ and $\operatorname{Coker}\Delta^\circ\cong\widetilde G(\Gamma)\slash\widetilde G_0(\Gamma).$ 

Yet another snake lemma diagram (which is also omitted) explains the relation between the cokernel of $\Delta^\circ$ and the usual sandpile group.
\begin{prop}
There is an exact sequence 
$$0\rightarrow\mathcal{H}_{\mathbb{Z}}(\Gamma)\rightarrow\mathbb{Z}^{\partial\Gamma}\rightarrow G(\Gamma)\rightarrow\operatorname{Coker}\Delta^\circ\rightarrow 0.$$
\label{prop_spcokerdcirc}
\end{prop}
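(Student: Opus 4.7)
The plan is yet another snake lemma, this time applied to the diagram
\[
\begin{tikzcd}
0 \arrow{r} & 0 \arrow{r} \arrow{d} & \mathbb{Z}^\Gamma \arrow{r}{=} \arrow{d}{\Delta} & \mathbb{Z}^\Gamma \arrow{r} \arrow{d}{\Delta^\circ} & 0 \\
0 \arrow{r} & \mathbb{Z}^{\partial\Gamma} \arrow{r} & \mathbb{Z}^\Gamma \arrow{r} & \mathbb{Z}^{\Gamma^\circ} \arrow{r} & 0
\end{tikzcd}
\]
whose bottom row is the tautological split short exact sequence coming from the partition $\Gamma=\partial\Gamma\sqcup\Gamma^\circ$. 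The right-hand square commutes by the very definition of $\Delta^\circ$ as the composition of $\Delta$ with the restriction to $\Gamma^\circ$, and the left-hand square commutes trivially.

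The three vertical maps have kernels $0$, $0$ and $\mathcal{H}_{\mathbb{Z}}(\Gamma)$ respectively --- the middle kernel vanishes because $\det\Delta\neq 0$, and the rightmost is the very definition of $\mathcal{H}_{\mathbb{Z}}(\Gamma)$ --- and cokernels $\mathbb{Z}^{\partial\Gamma}$, $G(\Gamma)$ and $\operatorname{Coker}\Delta^\circ$, the middle cokernel being the sandpile group by the first short exact sequence recalled in the introduction. Plugging these into the six-term snake lemma sequence and deleting the two resulting zeros on the left yields precisely the four-term exact sequence in the statement.

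There is no real obstacle; the entire content is the choice of diagram. For completeness one can unwind the three maps: the connecting homomorphism sends a harmonic function $H$ to $\Delta H$, which is automatically supported on $\partial\Gamma$ since $H$ is harmonic on $\Gamma^\circ$; the map $\mathbb{Z}^{\partial\Gamma}\to G(\Gamma)$ is induced by the inclusion $\mathbb{Z}^{\partial\Gamma}\hookrightarrow\mathbb{Z}^\Gamma$ followed by the quotient to $G(\Gamma)=\operatorname{Coker}\Delta$; and $G(\Gamma)\to\operatorname{Coker}\Delta^\circ$ is induced on cokernels by the restriction $\mathbb{Z}^\Gamma\to\mathbb{Z}^{\Gamma^\circ}$. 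A direct verification of exactness at each term --- injectivity on the left from $\Delta$ being injective, and surjectivity on the right by lifting any element of $\mathbb{Z}^{\Gamma^\circ}$ to $\mathbb{Z}^\Gamma$ via extension by zero --- recovers the snake lemma output and can serve as a sanity check.
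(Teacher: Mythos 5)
Your proof is correct and is precisely the snake-lemma argument the paper alludes to but omits: your diagram is the natural choice, and your computation of the connecting homomorphism as $H\mapsto \Delta H|_{\partial\Gamma}$ agrees with the paper's own description of the map $\mathcal{H}_{\mathbb{Z}}(\Gamma)\rightarrow\mathbb{Z}^{\partial\Gamma}$ as the composition of $\Delta$ with the projection to $\partial\Gamma$. Nothing to add.
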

Here, the map $\mathcal{H}_{\mathbb{Z}}(\Gamma)\rightarrow\mathbb{Z}^{\partial\Gamma}$ is the composition of $\Delta$ and the projection to $\partial\Gamma.$ Denote by $G_0(\Gamma)$ the image of the homomorphism $\mathbb{Z}^{\partial\Gamma}\rightarrow G(\Gamma).$ It is a subroup of the sandpile group generated by adding sand at the boundary. Note that $G_0(\Gamma)=\widetilde G_0(\Gamma)\cap G(\Gamma)$ and $\operatorname{Coker}\Delta^\circ\cong G(\Gamma)\slash G_0(\Gamma).$ This implies the following lemma.
\begin{lem} If $G(\Gamma)$ is generated by adding sand at the boundary, then $$\widetilde G(\Gamma)\cong\mathcal{H}_{\mathbb{R}}(\Gamma)\slash\mathcal{H}_{\mathbb{Z}}(\Gamma).$$
\label{lem_isoespcirch}
\end{lem}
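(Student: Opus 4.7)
The proof is essentially a chain deduction from the isomorphisms already assembled in the preceding discussion, so the plan is simply to unpack the hypothesis and apply them in sequence.

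First I would translate the hypothesis ``$G(\Gamma)$ is generated by adding sand at the boundary'' into an algebraic statement. By the definition given just before the lemma, $G_0(\Gamma)$ is precisely the image of the map $\mathbb{Z}^{\partial\Gamma}\to G(\Gamma)$, i.e.\ the subgroup generated by adding sand at the boundary. Therefore the hypothesis says $G_0(\Gamma)=G(\Gamma)$, and using $\operatorname{Coker}\Delta^\circ\cong G(\Gamma)\slash G_0(\Gamma)$ (Proposition 3 and the paragraph that follows), this gives $\operatorname{Coker}\Delta^\circ=0$.

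Next I would feed this vanishing into the relation $\operatorname{Coker}\Delta^\circ\cong\widetilde G(\Gamma)\slash\widetilde G_0(\Gamma)$ that was observed right after Proposition 2. This forces $\widetilde G(\Gamma)=\widetilde G_0(\Gamma)$.

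Finally, combining this with the already-established isomorphism $\widetilde G_0(\Gamma)\cong\mathcal{H}_{\mathbb{R}}(\Gamma)\slash\mathcal{H}_{\mathbb{Z}}(\Gamma)$ yields the desired conclusion. There is no genuine obstacle here; the whole content of the lemma is the bookkeeping observation that both the boundary-generation statement for $G(\Gamma)$ and the equality $\widetilde G(\Gamma)=\widetilde G_0(\Gamma)$ are controlled by the same cokernel $\operatorname{Coker}\Delta^\circ$. If anything, the only subtle point to double-check is that the identifications of $G_0(\Gamma)$ and $\widetilde G_0(\Gamma)$ as quotients modding out by $\operatorname{Coker}\Delta^\circ$ are genuinely compatible, which is visible from the fact that both come from restricting the same snake-lemma connecting homomorphism to $\mathcal{H}_{\mathbb{R}}(\Gamma)$ inside $\mathcal{H}_{\mathbb{R}\slash\mathbb{Z}}(\Gamma)$.
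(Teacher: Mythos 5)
Your argument is correct and is exactly the deduction the paper intends: the text preceding the lemma establishes $\operatorname{Coker}\Delta^\circ\cong G(\Gamma)\slash G_0(\Gamma)$ and $\operatorname{Coker}\Delta^\circ\cong\widetilde G(\Gamma)\slash\widetilde G_0(\Gamma)$ together with $\widetilde G_0(\Gamma)\cong\mathcal{H}_{\mathbb{R}}(\Gamma)\slash\mathcal{H}_{\mathbb{Z}}(\Gamma)$, and then states that these facts imply the lemma, which is precisely your chain. No gap.
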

For the rest of this section we will be concerned with the inverse limit $$\lim_{\mathbb{Z}^2\leftarrow\Gamma}\widetilde G(\Gamma)$$ over all finite $\Gamma\subset\mathbb{Z}^2.$ It can be replaced with a limit over any sequence of finite subgraphs $\Gamma_1\subset\Gamma_2\subset\dots\subset\mathbb{Z}^2$ such that $\cup_{n=1}^\infty\Gamma_n=\mathbb{Z}^2.$ We prefer to assume that all $\Gamma_n$ are convex (i.e. intersections of convex domains with the lattice) for the following reason. 
\begin{lem} For convex $\Gamma_1\subset\Gamma_2\subset\mathbb{Z}^2$ the restriction homomorphism $$\mathcal{H}_A(\Gamma_2)\rightarrow\mathcal{H}_A(\Gamma_1)$$ is surjective.
\label{lem_convsurj}
\end{lem}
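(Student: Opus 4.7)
The plan is to induct on $|\Gamma_2 \setminus \Gamma_1|$, reducing the general statement to a sequence of single-vertex extensions. The two ingredients I need are (i) a chain lemma: whenever $\Gamma_1 \subsetneq \Gamma_2$ are convex in $\mathbb{Z}^2$, there is some $v^* \in \Gamma_2 \setminus \Gamma_1$ with $\Gamma_1 \cup \{v^*\}$ again convex; and (ii) a single-vertex extension lemma: when $\Gamma' = \Gamma \cup \{v^*\}$ is convex with $\Gamma$ convex, the restriction $\mathcal{H}_A(\Gamma') \to \mathcal{H}_A(\Gamma)$ is surjective. Composing restrictions then gives the lemma.

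For (i), I would equip $W := \Gamma_2 \setminus \Gamma_1$ with the relation $v \preceq v'$ iff $v \in \mathrm{conv}(\Gamma_1 \cup \{v'\})$. Transitivity is immediate. For antisymmetry, if $v \preceq v'$ and $v' \preceq v$ with $v \neq v'$, substituting one relation into the other expresses $v'$ as a convex combination of points of $\mathrm{conv}(\Gamma_1)$ alone (the coefficient of $v'$ collapses), forcing $v' \in \mathrm{conv}(\Gamma_1) \cap \mathbb{Z}^2 = \Gamma_1$, contradicting $v' \in W$. Any $\preceq$-minimal $v^* \in W$ then satisfies $\mathrm{conv}(\Gamma_1 \cup \{v^*\}) \cap \mathbb{Z}^2 = \Gamma_1 \cup \{v^*\}$, since any extra lattice point would live in $W$ and be $\prec v^*$.

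For (ii), I would exploit two consequences of convexity. First, $v^*$ has at most two neighbors in $\Gamma$, and when two occur they are perpendicular rather than collinear: three neighbors always span a triangle containing $v^*$, and two opposite neighbors have $v^*$ as their midpoint, so in either case $v^* \in \mathrm{conv}(\Gamma) \cap \mathbb{Z}^2 = \Gamma$, a contradiction. In particular, $v^*$ itself is never interior in $\Gamma'$. Second, the only vertices of $\Gamma$ that can become interior in $\Gamma'$ are those neighbors $u \in \partial\Gamma$ of $v^*$ whose sole missing $\mathbb{Z}^2$-neighbor was $v^*$; and if the two perpendicular neighbors $u_1, u_2$ of $v^*$ both became interior, then $\Gamma$ would be forced to contain the two lattice points diagonally opposite $v^*$ through $u_1$ and $u_2$ respectively, whose midpoint is exactly $v^*$ --- once again forcing $v^* \in \Gamma$. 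So at most one new harmonicity equation arises in $\Gamma'$; it is linear in the single unknown $f(v^*)$ with coefficient $\pm 1$ and all other terms determined by $f_1 \in \mathcal{H}_A(\Gamma)$, so $f(v^*)$ is either uniquely determined or completely free, and the extension exists in any Abelian group $A$.

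The hard part is the final convexity argument in (ii), ruling out simultaneous new-interior status for two perpendicular neighbors of $v^*$. Without it one would be left with two independent determinations of $f(v^*)$ whose compatibility would need to be argued separately and could fail over general coefficient groups such as $\mathbb{Z}$ or $\mathbb{R}/\mathbb{Z}$. It is precisely this essentially two-dimensional use of convexity --- rather than the inductive scaffolding or the chain lemma --- that makes the statement specific to the $\mathbb{Z}^2$ setting.
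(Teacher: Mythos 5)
Your argument is correct, but it takes a genuinely different route from the paper. The paper's (very terse) proof factors every restriction through diamonds: it first asserts that the restriction from the diamond hull of a convex domain (the minimal lattice diamond containing it) to the domain itself is an isomorphism, and then that restriction between two nested diamonds is surjective; composing these gives surjectivity for arbitrary nested convex domains. You instead interpolate between $\Gamma_1$ and $\Gamma_2$ by a chain of convex sets, each obtained from the previous one by adjoining a single lattice point (your poset built from $v\preceq v'$ iff $v\in\mathrm{conv}(\Gamma_1\cup\{v'\})$ does produce such a chain), and then verify surjectivity for a one-point extension. The crucial local observation --- that convexity forbids $v^*$ from having two opposite lattice neighbours in $\Gamma$, hence from becoming interior itself, and forbids two of its perpendicular neighbours from simultaneously becoming interior (via the two diagonal points whose midpoint is $v^*$) --- is exactly what caps the number of new harmonicity equations at one, so the single unknown $f(v^*)$ appears with unit coefficient and the extension exists over any coefficient group $A$. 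Your decomposition is more elementary and self-contained: it avoids having to establish the diamond-hull isomorphism (which itself requires a propagation argument along diagonals and is only asserted, not proved, in the paper), at the price of a local case analysis; both arguments exploit lattice convexity in an essentially two-dimensional way.
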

\begin{proof}
A {\it diamond} is an intersection of $\mathbb{Z}^2$ with a  rotated by $45$ degrees rectangle $[a,b]\times[c,d]\subset\mathbb{R}^2$ for some real $a,b,c,d$. Diamonds, among finite convex discrete domains with non-empty interior on the square lattice, are characterized by the property that each vertex of their boundary has more than one neighbor in the complement of the domain. Note that an intersection of two diamonds is again a diamond.

A diamond hull $DH(\Gamma)$ of a finite $\Gamma\subset\mathbb{Z}^2$ is the minimal by inclusion diamond containing $\Gamma.$ As the first step, we observe that for convex $\Gamma,$ the restriction homomorphism $\mathcal{H}_A(DH(\Gamma))\rightarrow\mathcal{H}_A(\Gamma)$ is an isomorphism. This follows from the fact that convexity of $\Gamma$ implies that there is no vertex $v$ in the complement of $\Gamma$ such that it is adjacent to two distinct vertices $w_1,w_2\in\partial\Gamma$ and both $w_1$ and $w_2$ have only $v$ as an adjacent vertex in the complement. Thus, if $\Gamma$ is not a diamond and connected, it has a vertex $w$ on the boundary adjacent to a unique vertex $v$ in the complement. Consider a new domain $\Gamma'=\Gamma\cup\{v\}$ for which $w$ is not a boundary vertex. Then, for $\phi\in\mathcal{H}_A(\Gamma),$ there is a unique $\phi'\in\mathcal{H}_A(\Gamma')$ extending $\phi$ to $v$ by $-(\Delta\phi)(w),$ i.e. the restriction map $\mathcal{H}_A(\Gamma')\rightarrow\mathcal{H}_A(\Gamma)$ is an isomorphism. Note also that $DH(\Gamma)=DH(\Gamma')$ Now, we may consider a finite chain of inclusions $$\Gamma\subset\Gamma'\subset\Gamma''\subset\dots\subset\Gamma^{(n)}\subset\dots\subset DH(\Gamma),$$ where $\Gamma^{(n+1)}$ is obtained by adding one vertex to $\Gamma^{(n)}$ in the same way as we constructed $\Gamma'$ from $\Gamma,$ thus each inclusion induces and isomorphism at the level of $A$-valued harmonic functions.

Now we may replace $\Gamma_1$ and $\Gamma_2$ with their diamond hulls $D_1$ and $D_2$. Assuming that the inclusion $D_1\subset D_2$ is strict, there is a vertex $v\in D_2\backslash D_1$ adjacent to a vertex $w$ in $\partial D_1.$ Consider a domain $D_1'=D_1\cup\{v\},$ note that $\partial D_1'=\partial D_1\cup\{v\}.$  Therefore, since we are not requiring the vanishing of the Laplacian on the boundary, an element $\phi\in\mathcal{H}_A(D_1)$ can be extended arbitrarily to $v,$ i.e. the restriction homomorphism $\mathcal{H}_A(D_1')\rightarrow\mathcal{H}_A(D_1)$ is surjective (but not injective for $A\neq 0$). By the previous paragraph, we may replace $D_1'$ by its diamond hull, resulting in a strictly bigger diamond still contained in $D_2.$ Thus, we decompose the restriction homomorphism from $D_2$ to $D_1$ into a finite chain of restrictions, each of which is surjective. \end{proof}
\begin{lem}
For a finite $\Gamma\subset\mathbb{Z}^2$ the sandpile group $G(\Gamma)$ is generated by adding sand at the boundary.
\label{lem_genbound}
\end{lem}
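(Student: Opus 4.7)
The plan is to reduce the lemma, via the preceding proposition, to the claim that $\Delta^\circ\colon\mathbb{Z}^\Gamma\to\mathbb{Z}^{\Gamma^\circ}$ is surjective --- equivalently that $\operatorname{Coker}\Delta^\circ=0$ --- and then to prove this by an induction that exploits the $\mathbb{Z}^2$ coordinate structure.

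I will order the interior vertices by the lexicographic rule prioritizing the $y$-coordinate: $(a,b)\prec(a',b')$ iff $b<b'$, or $b=b'$ and $a<a'$. Induct on this order to show $\delta_v\in\Delta^\circ(\mathbb{Z}^\Gamma)$ for every $v\in\Gamma^\circ$. Given $v=(a,b)$, the vertex $u=(a,b-1)$ lies in $\Gamma$ because all four neighbors of the interior vertex $v$ do. Restricting the identity $\Delta\delta_u=\sum_{w\sim u,\,w\in\Gamma}\delta_w-4\delta_u$ to $\Gamma^\circ$ yields
\[
\Delta^\circ\delta_u\ =\ \delta_v\ +\ \sum_{w\sim u,\,w\neq v,\,w\in\Gamma^\circ}\delta_w\ -\ 4\delta_u,
\]
with the convention that the last summand is dropped when $u\in\partial\Gamma$. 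Crucially, $u$ itself and each of its other neighbors $(a,b-2)$, $(a\pm 1,b-1)$ has $y$-coordinate at most $b-1<b$; whenever such a point lies in $\Gamma^\circ$, it is a strict $\prec$-predecessor of $v$. Solving the display for $\delta_v$ expresses it as $\Delta^\circ\delta_u$ plus an integer linear combination of $\delta_{v'}$ with $v'\prec v$, and by the inductive hypothesis each such $\delta_{v'}$ already lies in $\Delta^\circ(\mathbb{Z}^\Gamma)$. The base case, namely the $\prec$-minimum interior vertex $v_1$, is covered by the same formula: minimality forces all relevant neighbors of $u$, as well as $u$ itself, to lie outside $\Gamma^\circ$, so the formula collapses to $\Delta^\circ\delta_u=\delta_{v_1}$.

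I do not anticipate a serious obstacle: the argument is a short combinatorial induction whose only subtlety is keeping track of whether $u$ is interior (contributing the diagonal $-4\delta_u$ term) or boundary (not contributing it), and verifying that in either case the extra terms on the right-hand side are $\prec$-earlier than $v$.
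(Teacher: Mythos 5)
Your proof is correct and is essentially the paper's own argument spelled out in detail: the paper's one-line induction (``adding a grain of sand at a vertex can be achieved by adding grains to the left of it and at the boundary'') is exactly your coordinate sweep, with ``left'' replaced by ``below'' and the conclusion packaged as the surjectivity of $\Delta^\circ$ via the exact sequence the paper establishes just before the lemma. The reduction to $\operatorname{Coker}\Delta^\circ=0$ and the bookkeeping of the $-4\delta_u$ term are both sound, so there is nothing to fix.
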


\begin{proof} By induction one can show that adding a grain of sand at a vertex can be achieved by adding grains to the left of it and at the boundary. 
\end{proof}

\begin{remark}
    For abstract graphs, which are not subgraphs of the lattice, the above statement is not necessarily true. One of the simplest examples is provided by a graph with three non-sink vertices $v_-,$ $v_0$ and $v_+$ such that each of the two are connected by a single edge and only $v_0$ is connected to the sink (by a single edge), thus $v_0$ is the only boundary vertex of this graph. The number of spanning trees on it is $3,$ therefore the sandpile group is isomorphic to $\mathbb{Z}\slash 3\mathbb{Z},$ which has a single non-trivial automorphism given by multiplication by $-1,$ realized by the graph automorphism permuting $v_-$ and $v_+.$ The neutral element of this sandpile group is invariant under the permutation, and if we add a single grain at the boundary vertex $v_0,$ the state will remain symmetric after a relaxation, i.e. its result is again the neutral element. We conclude that the subgroup of the sandpile group generated by adding sand at the boundary is trivial in this example.
\end{remark}

\begin{lem} For a sequence of Abelian groups and epimorphisms 
$$A_1\xleftarrow{\pi_1}A_2\xleftarrow{\pi_2}A_3\xleftarrow{\pi_3}\cdots$$ there is a short exact sequence 
$$0\rightarrow\lim_{\leftarrow} A_n\rightarrow\prod_{n=1}^\infty A_n\xrightarrow{\Psi}\prod_{n=1}^\infty A_n\rightarrow 0,$$ where $\displaystyle\lim_{\leftarrow} A_n$ is the inverse limit, $\prod_{n=1}^\infty A_n$ is the direct product and \begin{equation}\label{eq_psi}\tag{$\star$}
    \Psi(a_1,a_2,\dots)=(\pi_1(a_2)-a_1,\pi_2(a_3)-a_2,\dots).\end{equation}
\label{lem_colimit}
\end{lem}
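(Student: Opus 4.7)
The plan is to verify exactness of the proposed sequence at each of its three nontrivial positions; the only position requiring nonformal input is surjectivity of $\Psi$, where the epimorphism hypothesis enters.

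First I would recall that by construction the inverse limit sits inside the direct product as
$$\lim_{\leftarrow} A_n \;=\; \bigl\{(a_n)\in \prod_{n\ge 1} A_n : \pi_n(a_{n+1})=a_n \text{ for all } n\ge 1\bigr\},$$
so the map $\lim_{\leftarrow} A_n \to \prod A_n$ is tautologically injective, settling exactness at the leftmost spot. Moreover the same description shows that $(a_n)$ lies in $\ker\Psi$ iff $\pi_n(a_{n+1})-a_n=0$ for every $n$, which is precisely the compatibility condition defining $\lim_{\leftarrow} A_n$. Hence the image of the inclusion equals $\ker\Psi$, giving exactness in the middle.

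The remaining step is to show $\Psi$ is surjective. Given any target $(b_1,b_2,\dots)\in\prod A_n$, I would construct a preimage $(a_1,a_2,\dots)$ inductively: set $a_1=0$ (any choice is fine), and having chosen $a_n$, use surjectivity of $\pi_n$ to pick some $a_{n+1}\in\pi_n^{-1}(a_n+b_n)$. By construction $\pi_n(a_{n+1})-a_n=b_n$ for every $n$, so $\Psi(a_1,a_2,\dots)=(b_1,b_2,\dots)$.

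The only ``obstacle'' is precisely this inductive choice of preimages at each stage; it is overcome trivially because each $\pi_n$ is assumed to be an epimorphism, so the fibres are all nonempty. Dropping that hypothesis would replace $0$ on the right by the $\lim^{1}$ term, but under the stated assumption no such correction is needed.
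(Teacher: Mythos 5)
Your proposal is correct and follows the same route as the paper's (much terser) proof: identify $\lim_{\leftarrow} A_n$ with $\ker\Psi$ by definition, and obtain surjectivity of $\Psi$ from the surjectivity of the $\pi_n$ via the evident inductive choice of preimages. Your write-up simply makes explicit the induction that the paper leaves implicit.
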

\begin{proof} 
The inverse limit is by definition the kernel of $\Psi.$ The surjectivity is implied by surjectivity of $\pi_1,\pi_2 \dots$ 
\end{proof}

For an arbitrary finite $\Gamma,$ there is a natural homomorphism from $\mathcal{H}_{\mathbb{R}}(\Gamma)$ to $\widetilde{G}(\Gamma)$ given by taking $\Delta$ of a real-valued harmonic function, which results in a real-valued function supported at the boundary of $\Gamma,$ and taking it modulo the image of $\widetilde{\Delta}$ described at the beginning of this section. Considering the colimit for $\Gamma\rightarrow\mathbb{Z}^2,$ we still have a homomorphism from $\mathcal{H}_{\mathbb{R}}(\mathbb{Z}^2)$ to $\displaystyle\lim_{\mathbb{Z}^2\leftarrow\Gamma}\widetilde G(\Gamma).$ It is easy to see that its kernel is $\mathcal{H}_{\mathbb{Z}}(\mathbb{Z}^2)$ (this may be done by applying the interpretation of the extended sandpile group as the cockernel of $\widetilde{\Delta}$ and the injectivity of $\Delta$ on finite graphs),i.e. we have an exact sequence $$0\rightarrow\mathcal{H}_{\mathbb{Z}}(\mathbb{Z}^2)\rightarrow\mathcal{H}_{\mathbb{R}}(\mathbb{Z}^2)\rightarrow\displaystyle\lim_{\mathbb{Z}^2\leftarrow\Gamma}\widetilde G(\Gamma)\rightarrow X\rightarrow 0.$$
In \cite{lang2019harmonic}, it was asked: is it true that $X=0$? 

\begin{thm}[Affirmative answer to Question 1 from \cite{lang2019harmonic}]The natural inclusion of
$\mathcal{H}_{\mathbb{R}}(\mathbb{Z}^2)\slash\mathcal{H}_{\mathbb{Z}}(\mathbb{Z}^2)$ to $\displaystyle\lim_{\mathbb{Z}^2\leftarrow\Gamma}\widetilde G(\Gamma)$ is an isomorphism.
\label{thm_answer}
\end{thm}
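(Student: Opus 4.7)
My plan is to realize $\displaystyle\lim_{\mathbb{Z}^2 \leftarrow \Gamma} \widetilde{G}(\Gamma)$ as the cokernel of the inclusion $\mathcal{H}_{\mathbb{Z}}(\mathbb{Z}^2) \hookrightarrow \mathcal{H}_{\mathbb{R}}(\mathbb{Z}^2)$ through a snake-lemma chase that feeds the Milnor-style inverse-limit lemma above with the structural results on finite convex subgraphs. Fix a convex exhaustion $\Gamma_1 \subset \Gamma_2 \subset \cdots$ of $\mathbb{Z}^2$.

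The first step is to assemble, for each $n$, the short exact sequence
$$0 \to \mathcal{H}_{\mathbb{Z}}(\Gamma_n) \to \mathcal{H}_{\mathbb{R}}(\Gamma_n) \to \widetilde{G}(\Gamma_n) \to 0,$$
which follows from the boundary-generation lemma combined with the earlier lemma identifying $\widetilde{G}(\Gamma) \cong \mathcal{H}_{\mathbb{R}}(\Gamma)/\mathcal{H}_{\mathbb{Z}}(\Gamma)$ in that situation. By the convex-restriction lemma, the systems $\{\mathcal{H}_{\mathbb{Z}}(\Gamma_n)\}$ and $\{\mathcal{H}_{\mathbb{R}}(\Gamma_n)\}$ are both surjective, so the induced quotient system $\{\widetilde{G}(\Gamma_n)\}$ is surjective as well.

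Next, I identify the inverse limit $\displaystyle\lim_{\leftarrow} \mathcal{H}_A(\Gamma_n)$ with $\mathcal{H}_A(\mathbb{Z}^2)$ for $A = \mathbb{Z}$ and $A = \mathbb{R}$: any compatible tuple $(h_n)$ glues to a single function on $\bigcup \Gamma_n = \mathbb{Z}^2$, and since each vertex of $\mathbb{Z}^2$ eventually lies in the interior $\Gamma_n^\circ$, the glued function is harmonic everywhere. I then apply the Milnor-style lemma to each of the three surjective systems above, obtaining three vertical short exact sequences of the form $0 \to \lim A_n \to \prod A_n \xrightarrow{\Psi} \prod A_n \to 0$. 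These sit as the three columns of a commutative diagram whose rows are the products of the Step 1 sequences; the squares commute by naturality of $\Psi$ in the variable $A_n$. Applying the snake lemma to the two horizontal product rows yields
$$0 \to \mathcal{H}_{\mathbb{Z}}(\mathbb{Z}^2) \to \mathcal{H}_{\mathbb{R}}(\mathbb{Z}^2) \to \lim_{\leftarrow} \widetilde{G}(\Gamma_n) \to 0,$$
and tracing a global real harmonic $f$ through the diagram confirms that its image in $\lim_{\leftarrow} \widetilde{G}(\Gamma_n)$ is the tuple $([f|_{\Gamma_n}])_n$, which is precisely the natural inclusion of the statement. The theorem follows by passing to the quotient by $\mathcal{H}_{\mathbb{Z}}(\mathbb{Z}^2)$.

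The only point demanding genuine care, rather than bookkeeping, is the identification $\lim_{\leftarrow} \mathcal{H}_A(\Gamma_n) \cong \mathcal{H}_A(\mathbb{Z}^2)$: it depends on the exhaustion being a genuine one with $\bigcup \Gamma_n^\circ = \mathbb{Z}^2$, which is automatic for a convex exhaustion by, say, growing rectangles but could fail for exhaustions that are cofinal only in a coarser sense. Once this identification is in hand, everything else reduces to standard homological algebra.
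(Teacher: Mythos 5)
Your argument is correct and is essentially the paper's own proof with the $3\times 3$ diagram transposed: the paper runs the snake lemma with the coefficient sequences $0\to\prod\mathcal{H}_{\mathbb{Z}}(\Gamma_n)\to\prod\mathcal{H}_{\mathbb{R}}(\Gamma_n)\to\prod\mathcal{H}_{\mathbb{R}/\mathbb{Z}}(\Gamma_n)\to0$ as the exact rows and the Milnor maps $\Psi$ as the vertical arrows, identifying $\lim\widetilde G(\Gamma_n)$ with $\mathcal{H}_{\mathbb{R}/\mathbb{Z}}(\mathbb{Z}^2)$ via Proposition 1, and it invokes exactly the same four lemmas you do. Your explicit attention to $\bigcup\Gamma_n^\circ=\mathbb{Z}^2$ in the identification $\lim\mathcal{H}_A(\Gamma_n)\cong\mathcal{H}_A(\mathbb{Z}^2)$ is a point the paper states without comment, but there is no substantive difference in the route.
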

\begin{proof}
Consider an ascending chain of convex discrete domains ${\Gamma_n}$ converging to $\mathbb{Z}^2.$ For the fifth and last time we will apply the snake lemma starting with the following diagram, where the rows are short exact sequences (this follows from applying Lemma \ref{lem_genbound} and then Proposition \ref{prop_isoextsand} followed by Lemma \ref{lem_isoespcirch}), and the two squares are commutative (which is the compatibility of $\Psi_A$ with the inclusion and projection at the level of coefficients):

\[ \begin{tikzcd}
 0\arrow{r}&\prod\mathcal{H}_{\mathbb{Z}}(\Gamma_n) \arrow{r} \arrow{d}{\Psi_{\mathbb{Z}}} &  \prod\mathcal{H}_{\mathbb{R}}(\Gamma_n) \arrow{d}{\Psi_{\mathbb{R}}}\arrow{r}& \prod\mathcal{H}_{\mathbb{R}\slash\mathbb{Z}}(\Gamma_n)\arrow{r}\arrow{d}{\Psi_{\mathbb{R}\slash\mathbb{Z}}}&0 \\%
 0\arrow{r}&\prod\mathcal{H}_{\mathbb{Z}}(\Gamma_n) \arrow{r}&  \prod\mathcal{H}_{\mathbb{R}}(\Gamma_n)\arrow{r}&\prod\mathcal{H}_{\mathbb{R}\slash\mathbb{Z}}(\Gamma_n)\arrow{r}&0.
\end{tikzcd}
\]

Here, each downward arrow $\Psi_A,$ for $A=\mathbb{Z}$, $\mathbb{R}$ or $\mathbb{R}\slash\mathbb{Z},$ stands for the homomorphism given by a formula analogous to (\ref{eq_psi}) with $\pi_n$ being replaced by $\pi^A_n,$ the restriction homomorphism $\mathcal{H}_A(\Gamma_{n+1})\rightarrow\mathcal{H}_A(\Gamma_{n}).$ The snake lemma for the above diagram gives an exact sequence
$$\operatorname{Ker}\Psi_{\mathbb{Z}}\rightarrow\operatorname{Ker}\Psi_{\mathbb{R}}\rightarrow\operatorname{Ker}\Psi_{\mathbb{R}\slash\mathbb{Z}}\rightarrow\operatorname{Coker}\Psi_{\mathbb{Z}}\rightarrow\operatorname{Coker}\Psi_{\mathbb{R}}\rightarrow\operatorname{Coker}\Psi_{\mathbb{R}\slash\mathbb{Z}}.$$

Note that $\operatorname{Ker}\Psi_A$ is by definition the colimit of $\mathcal{H}_A(\Gamma_n)$ which is trivially equal to $\mathcal{H}_A(\mathbb{Z}^2).$ By Lemma \ref{lem_convsurj}, all the homomorphisms $\pi_n^A$ are surjective, thus we may apply Lemma \ref{lem_colimit}, i.e. all the cokernels in the above six-term sequence vanish, which results in $X=0.$ In other words, $\mathcal{H}_{\mathbb{R}}(\mathbb{Z}^2)\slash\mathcal{H}_{\mathbb{Z}}(\mathbb{Z}^2)$ and $\displaystyle\lim_{\mathbb{Z}^2\leftarrow\Gamma}\widetilde G(\Gamma)$ are naturally isomorphic.
\end{proof}

\section{Monomorphisms and epimorphisms}
For an Abelian group $A$ denote by $A^*$ its Pontryagin dual, i.e. $$A^*=\operatorname{Hom}(A,\mathbb{R}\slash\mathbb{Z}).$$ In general, if $A$ is finite then it is isomorphic to $A^*.$ However, a choice of such an isomorphism is equivalent to an extra structure on $A$ which is a non-degenerate paring on $A$ with values in the circle $\mathbb{R}\slash\mathbb{Z}.$ For a sandpile group of a graph such a paring comes naturally. 
\begin{prop} Sandpile groups of finite graphs are canonically self-dual.
\label{prop_selfdual}
\end{prop}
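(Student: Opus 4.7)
The plan is to build a canonical non-degenerate pairing $\langle\cdot,\cdot\rangle\colon G(\Gamma)\times G(\Gamma)\to\mathbb{R}/\mathbb{Z}$ directly out of the Laplacian, with no auxiliary choices. Since $\Delta\colon\mathbb{R}^\Gamma\to\mathbb{R}^\Gamma$ has nonzero determinant it is invertible over $\mathbb{Q}$, and for $f,g\in\mathbb{Z}^\Gamma$ I would set
$$\langle f,g\rangle := f^{\top}\Delta^{-1}g\pmod{\mathbb{Z}}.$$
The Pontryagin dual of $G(\Gamma)$ will then be identified with $G(\Gamma)$ via the adjoint $\Phi([g])([f])=\langle f,g\rangle$.

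The first task is to check that this descends to $G(\Gamma)=\mathbb{Z}^\Gamma/\Delta(\mathbb{Z}^\Gamma)$. Replacing $g$ by $g+\Delta h$ adds $f^{\top}h\in\mathbb{Z}$, which is harmless. Replacing $f$ by $f+\Delta h$ adds $(\Delta h)^{\top}\Delta^{-1}g=h^{\top}\Delta^{\top}\Delta^{-1}g$, and this is exactly where the setup of the paper is used: the graph is undirected and unweighted, so $\Delta^{\top}=\Delta$ and the contribution collapses to $h^{\top}g\in\mathbb{Z}$. Thus well-definedness of the pairing is a direct consequence of the symmetry of the reduced Laplacian, and this is the only step with any real content; without symmetry, the construction genuinely fails.

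Next I would verify non-degeneracy. If $f^{\top}\Delta^{-1}g\in\mathbb{Z}$ for every $f\in\mathbb{Z}^\Gamma$, then the rational vector $v=\Delta^{-1}g\in\mathbb{Q}^\Gamma$ pairs integrally with every element of $\mathbb{Z}^\Gamma$, which forces $v\in\mathbb{Z}^\Gamma$ and hence $g\in\Delta(\mathbb{Z}^\Gamma)$, i.e.\ $[g]=0$ in $G(\Gamma)$. Consequently $\Phi\colon G(\Gamma)\to G(\Gamma)^{*}$ is injective, and since the sandpile group is finite and a finite abelian group has the same order as its Pontryagin dual, $\Phi$ is an isomorphism.

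Finally, the word \emph{canonical} should be justified by noting that every ingredient -- the identification $G(\Gamma)\cong\operatorname{Coker}\Delta$, the extension of $\Delta$ to an $\mathbb{R}$-linear map, and the pairing formula itself -- depends only on the pair $(\Gamma,\Delta)$, with no basis, ordering, or splitting chosen. I expect the main obstacle, as indicated above, to be pinpointing symmetry of $\Delta$ as the decisive hypothesis; once that is isolated, the rest of the argument is formal.
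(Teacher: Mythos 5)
Your proof is correct, and it is in substance the same pairing as the paper's, just written in different coordinates. The paper pairs $\operatorname{Coker}\Delta$ against $\operatorname{Ker}(\Delta\otimes\mathbb{R}/\mathbb{Z})$ via $Q(\phi+\Delta\mathbb{Z}^\Gamma,\psi)=\sum_v\phi(v)\psi(v)$ and then invokes the two incarnations $G(\Gamma)\cong\operatorname{Coker}\Delta$ and $G(\Gamma)\cong\operatorname{Ker}(\Delta\otimes\mathbb{R}/\mathbb{Z})$; under the canonical identification $[g]\mapsto\Delta^{-1}g\bmod\mathbb{Z}$ of $\operatorname{Coker}\Delta$ with the kernel of the circle-valued Laplacian, that pairing becomes exactly your discriminant form $\langle f,g\rangle=f^{\top}\Delta^{-1}g\bmod\mathbb{Z}$. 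You correctly isolate symmetry of $\Delta$ as the decisive hypothesis for well-definedness -- the paper uses it too, hidden in the step $\sum_v(\Delta\delta_w)(v)\psi(v)=(\Delta\psi)(w)$. What your version buys is a manifestly symmetric bilinear form living entirely on $\operatorname{Coker}\Delta$, at the cost of inverting $\Delta$ over $\mathbb{Q}$ and the small extra argument that a rational vector pairing integrally with all of $\mathbb{Z}^\Gamma$ is integral; the paper's version keeps non-degeneracy one line long ($Q(\delta_w+\Delta\mathbb{Z}^\Gamma,\psi)=\psi(w)$) but needs the second incarnation of $G(\Gamma)$ from Remark 1. Both are canonical and both are complete proofs.
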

\begin{proof}
To construct the pairing on $G(\Gamma)$ we apply natural isomorphisms $G(\Gamma)\cong\operatorname{Coker}\Delta$ and $G(\Gamma)\cong\operatorname{Ker}(\Delta\otimes\mathbb{R}\slash\mathbb{Z})$ to the components of the pairing $$Q\colon\operatorname{Coker}\Delta\times\operatorname{Ker}(\Delta\otimes\mathbb{R}\slash\mathbb{Z})\rightarrow \mathbb{R}\slash\mathbb{Z}$$ given by $$Q(\phi+\Delta\mathbb{Z}^\Gamma,\psi)=\sum_{v\in\Gamma}\phi(v)\psi(v).$$ 

First, we need to show that $Q$ is well defined, i.e. the right hand side doesn't depend on the choice of a representative $\phi$ in the coset. This follows from observing $$\sum_{v\in\Gamma}(\Delta\delta_w)(v)\psi(v)=(\Delta\psi)(w)=0$$ for any $w\in\Gamma.$ Second, we need to show that the homomorphism $$\psi\in\operatorname{Ker}(\Delta\otimes\mathbb{R}\slash\mathbb{Z})\mapsto Q(-,\psi)\in(\operatorname{Coker}\Delta)^*$$ is injective (and therefore bijective due to the same cardinality). Indeed, its kernel is trivial since for  non-zero $\psi$ there exist $w\in\Gamma$ such that $\psi(w)\neq 0$ and so $Q(\delta_w+\Delta\mathbb{Z}^\Gamma,\psi)=\psi(w)\neq 0.$
\end{proof}

This proposition (and the vanishing of $\operatorname{Ext}^1(A,\mathbb{R}\slash\mathbb{Z})$ for a finite Abelian group $A$) implies that there is a canonical one-to-one correspondence between monomorphisms from $G(\Gamma_1)$ to $G(\Gamma_2)$ and epimorphisms from $G(\Gamma_2)$ to $G(\Gamma_1)$ for any two finite graphs.  

A particular set of monomorphisms between some sandpile groups was introduced in \cite{lang2022sandpile} in the case of certain types of tilings. The construction presented there is quite involved, thus it seems to be appropriate to write below a more direct definition of these monomorphisms in the simplest case of a rectangular tiling. 

Let $\Gamma(p,q)$ be the portion of the square lattice given by the intersection with an open rectangle $(0,p)\times(0,q),$ where $p$ and $q$ are positive integers. Think of elements in $G(\Gamma(p,q))$ as of strictly harmonic circle valued functions on $\Gamma(p,q)$ as in Remark \ref{rem_isosp}. Then, for any pair $(m,n)$ of positive integers, the monomorphism from $G(\Gamma(p,q))$ to $G(\Gamma(mp,nq))$ is given by assigning to $\phi$ $$(x,y)\in\Gamma(mp,nq)\mapsto(-1)^{\lfloor\frac{x}{p}\rfloor+\lfloor\frac{y}{q}\rfloor}\phi(\mu^{-1}_{p,\lfloor\frac{x}{p}\rfloor}(x),\mu^{-1}_{q,\lfloor\frac{y}{q}\rfloor}(y)),$$ where $\lfloor\cdot\rfloor$ denotes the integer part, $\phi$ is extended by $0$ outside $\Gamma(p,q),$ and $\mu_{p,k}\colon [0,p]\rightarrow[kp,(k+1)p]$ is given by $$\mu_{p,k}(x)=\begin{cases}
			kp+x, & \text{if $k$ even}\\
            (k+1)p-x, & \text{otherwise.}
		 \end{cases}$$ The dual map (which by above mentioned considerations is an epimorphism) is given by assigning to an element $\psi$ of $G(\Gamma(mp,nq))$ a circle-valued function $$(x,y)\in\Gamma(p,q)\mapsto\sum_{k=0}^{m-1}\sum_{l=0}^{n-1}(-1)^{k+l}\psi(\mu_{p,k}(x),\mu_{q,l}(y)).$$ 

Next we observe that the composition $$G(\Gamma(p,q))\hookrightarrow G(\Gamma(mp,nq))\twoheadrightarrow G(\Gamma(p,q))$$ of the epimorphism and the monomorphism is the identity homomorphism multiplied by $mn$. There are two extreme cases: when $mn$ is coprime with $g(p,q)=|G(\Gamma(p,q))|,$ in which case $G(\Gamma(p,q))$ is realized as a direct summand of $G(\Gamma(mp,nq)),$ or when $mn$ is divisible by $g(p,q),$ and therefore we realize the group $G(\Gamma(p,q))$ as a subgroup of the kernel of the epimorphism, which implies that $(g(p,q))^2$ divides $g(mp,nq).$ 

\begin{table}
    
\begin{centering}
\begin{tabular}{|l||*{6}{c|}}\hline
\backslashbox{$m$}{\vspace{-5pt}$n$}
&\makebox{2}&\makebox{3}&\makebox{4}&\makebox{5}&\makebox{6}\\ \hline\hline
2 & 4 &15&56&209&780\\\hline
3 &15&192&2415&30305&380160\\\hline
4 &56&2415&100352&4140081&170537640\\\hline
5 &209&30305&4140081&557568000&74795194705\\\hline
6 &780&380160&170537640&74795194705&32565539635200\\\hline
\end{tabular}

\end{centering}

\vspace{15pt}
\caption{Orders of sandpile groups on $(0,m)\times(0,n)\cap\mathbb{Z}^2.$ Observe the divisibility properties in accordance with Theorem 2. For instance, $g(2,4)$ is divisible by $g(2,2),$ but not by $(g(2,2))^2$ since the scaling factor $2=1\times 2$ is not $0$ modulo $g(2,2)=4;$ on the other hand $(g(2,2))^2$ divides $g(4,4).$}

\end{table}

A simpler divisibility property which is implied by the existence of epi-/mono-morphism is that $g(p,q)$ always divides $g(mp,nq).$ As it was pointed out to the author, this fact was observed before. The only reference, however, appears to be a webpage \cite{oeis} in the Online Encyclopedia of Integer Sequences (OEIS), where one may find a contribution by Peter Bala made in 2014, claiming that this property follows from some explicit formula.

The relationship between two rectangles $\Gamma(p,q)$ and $\Gamma(mp,nq)$ represented by a continuous folding map $\Phi_{p,q,m,n}:[0,mp]\times [0,nq]\rightarrow [0,p]\times[0,q],$ which allows to define the corresponding morphisms between their sandpile groups, is a very particular case of a much more general situation. Intuitively, one should imagine that an $mp\times nq$ rectangular piece of grid paper is folded to a $p\times q$ rectangular piece. In this process, every fold is made across a horizontal or vertical line passing through the nodes of the lattice, and the graph structure of the square lattice is preserved. If we cut the bigger piece along the folds, we would obtain $mn$ copies of the $p\times q$ rectangular piece, each of which is placed isometrically on $(0,p)\times(0,q])$ either preserving or reversing the orientation. The defined above sandpile morphisms correspond to a matrix with rows and columns indexed by $\Gamma(p,q)$ and $\Gamma(mp,nq),$ with entries being $0$, $1$ or $-1,$ with $0$ corresponding to either a point of $\Gamma(mp,nq)$ being on a fold (and any point of $\Gamma(p,q)$), or two points $v\in\Gamma(p,q)$ and $w\in\Gamma(mp,nq)$ such that $v\neq\Phi_{p,q,m,n}(w)$, and $\pm 1$ corresponding to the case of $v=\Phi_{p,q,m,n}(w)$ with the sign depending on whether the orientation is preserved or reversed.

In fact, one is allowed to perform the folds not only along the vertical and horizontal lines, but also along the lines of slope $\pm 1.$ This type of lines is distinguished among all other lines since the corresponding reflections preserve the graph structure of $\mathbb{Z}^2.$ With this in mind, we give a formal (and, perhaps, not optimal) definition of a {\it grid folding} with an adjective ``grid'' referring to grid paper, some of its instances are shown on Figure \ref{fig_folding}. This concept is a vast generalization of a $DC$-tiling introduced previously in \cite{lang2022sandpile}.

Let $\Omega, \Omega'\subset\mathbb{R}^2$ be two bounded connected domains such that each is equal to the closure of its interior, and let $\Phi\colon\Omega'\rightarrow\Omega$ be a continuous map. We call $\Phi$ a grid folding if the following conditions are satisfied:

\begin{itemize}
    \item the preimage of the interior of $\Omega$ has $d_\Phi$ connected components such that the restriction of $\Phi$ to each is an isometry preserving the lattice;
    \item the intersection of the preimage of the boundary of $\Omega$ with the interior of $\Omega'$ is a union of segments each of which extends to a line passing through some vertex of the lattice and having a slope equal to either $0,\infty$ or $\pm 1;$
    \item for a pair of connected components $\Omega_1$ and $\Omega_2$ in $\Phi^{-1}(\Omega^\circ)$ such that the intersection of their closures is a set with more than one point, this intersection is contained in one of the above mentioned lines and the map $(\Phi|_{\Omega_2})^{-1}\circ \Phi|_{\Omega_1}$ extends to the reflection with respect to this line.  
\end{itemize}

We call $d_\Phi$ the {\it degree} of the folding $\Phi.$ For instance, for the case of the folding $\Phi_{p,q,m,n}:[0,mp]\times [0,nq]\rightarrow [0,p]\times[0,q],$ its degree is $mn.$ To a folding $\Phi$ one associates the matrix $M_\Phi$ indexed by $\Gamma=\Omega^\circ\cap\mathbb{Z}^2$ and $\Gamma'=(\Omega')^\circ\cap\mathbb{Z}^2,$  with entries being $0$ and $\pm 1,$ depending on wether a point is on the fold, and whether the orientation is preserved or reversed, otherwise. This matrix and its transpose induce a monomorphism and an epimorphism between $G(\Gamma)$ and $G(\Gamma').$ Their composition is equal to the multiplication by $d_\Phi$ homomorphism (this is analogous to the composition of cohomology pullback and push-forward associated with a topological covering map). This homomorphism is zero if the degree is divisible by the cardinality of $G(\Gamma)$, which gives the second assertion of the following theorem.

\begin{thm}
    Let $\Gamma$ be a degree $d$ grid folding of $\Gamma'.$  Then, the number of spanning trees $g(\Gamma)$ on $\Gamma$ divides the number of spanning trees $g(\Gamma')$ on $\Gamma'.$  Moreover, if $d$ is divisible by $g(\Gamma)$, then $(g(\Gamma))^2$ divides $g(\Gamma').$
    \label{thm_divisibility}
\end{thm}

Now we would like to describe the relationship between the sandpile monomorphisms and the restriction homomorphisms between the corresponding extended sandpile groups. Namely, the monomorphism may be seen as a section of the restriction defined over the usual sandpile group.

\begin{prop}
For two domains $\Gamma=\Omega^\circ\cap\mathbb{Z}^2$ and $\Gamma'=(\Omega')^\circ\cap\mathbb{Z}^2$ related by a grid folding $\Phi:\Omega'\rightarrow\Omega$ such that $\Omega\subset\Omega'$ and $\Phi$ restricted to $\Omega$ is the identity map, there is the following commutative diagram: $$\begin{tikzcd}
G(\Gamma) \arrow[hook, r] \arrow[d, hook]
& G(\Gamma') \arrow[d, hook] \\
\widetilde{G}(\Gamma)
& \arrow[l, two heads] \widetilde{G}(\Gamma').
\end{tikzcd} $$
\label{prop_section}
\end{prop}
A particular corollary of this proposition (the proof of which is  trivial) is that the direct limit of sandpile groups of rectangular domains with respect to our monomorphisms embeds to the inverse limit of extended sandpile groups which was studied in the previous section. The main problem we put forward is to identify the image of this embedding. For instance, one may expect that it is a rational subtorus or a divisible group (a step in this direction was previously made in \cite{lang2022sandpile}, where it was shown that the limit contains elements of all finite orders). This expectation translates nicely in the dual language, i.e. that the inverse limit of (usual) sandpile groups of rectangular domains under the epimorphisms has no torsion.

It is instructive to see what is the situtation in the one-dimensional case, which is as a general rule dramatically simpler. The extended sandpile group of an integer interval $(0,n)\cap\mathbb{Z}$ for $n\geq 3$ is a two-dimensional torus made of circle-valued arithmetic progressions, and the usual sandpile group is a cyclic group of order $n,$ corresponding to such progressions which take $0\in\mathbb{R}\slash\mathbb{Z}$ at $0$ and $n.$ The direct limit of this groups is $\mathbb{Q}/\mathbb{Z},$ a rational circle corresponding to periodic circle-valued arithmetic progressions with value $0$ at the origin, and the inverse limit under the epimorphisms is $\widehat{\mathbb{Z}}=\prod_{l}\mathbb{Z}_{l},$ a direct product of all additive groups of $l$-adic integers, which indeed has no torsion.

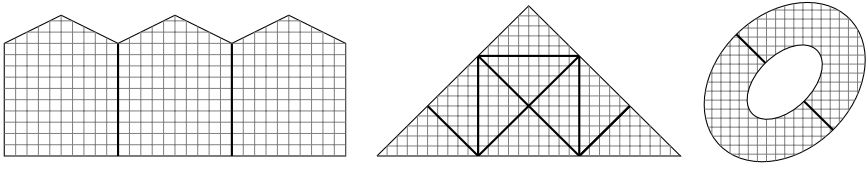
\begin{figure}
    \centering
    \begin{tikzpicture}

        \begin{scope}[scale=1.5]
            \clip[draw](-1,0)--(-1,1)--(-0.5,1.25)--(0,1)--(0.5,1.25)--(1,1)--(1.5,1.25)--(2,1)--(2,0)--cycle;
            \draw[step=0.1,gray,very thin] (-1,0) grid (2,1.5);
            \draw[thick](0,0)--(0,1);
            \draw[thick](1,0)--(1,1);
            
        \end{scope}
       
        \begin{scope}[scale=2,xshift=77]
        \clip [draw] (-1,0)--(0,1)--(1,0)--(-1,0);
        \draw[step=0.06666,very thin,gray] (-1,0) grid (1,1);
        
        \draw[thick] (-0.3333,0)--(-0.3333,0.6666)--(0.3333,0.6666)--(0.3333,0);
        \draw[thick] (-0.3333,0)--(-0.6666,0.3333);
        \draw[thick] (0.3333,0)--(0.6666,0.3333);
        \draw[thick] (0.3333,0)--(0.6666,0.3333);
        \draw[thick] (0,0.3333)--(-0.3333,0);
        \draw[thick] (0,0.3333)--(0.3333,0);
        \draw[thick] (0,0.3333)--(-0.3333,0.6666);
        \draw[thick] (0,0.3333)--(0.3333,0.6666);
        
        \end{scope}

        \begin{scope}[xshift=250,yshift=28,scale=1.2]
            \clip [draw,cm={cos(45) ,-sin(45) ,sin(45) ,cos(45) ,(0 cm,0 cm)}] (0,0) ellipse (0.75 and 1);
            \draw[step=0.1,gray,very thin] (-1,-1) grid (1,1);
            \draw[thick](-1,1)--(1,-1);
            \draw[fill=white,cm={cos(45) ,-sin(45) ,sin(45) ,cos(45) ,(0 cm,0 cm)}] (0,0) ellipse (0.3 and 0.5);   
        \end{scope}
    \end{tikzpicture}
    \caption{Examples of grid foldings of degrees 3, 9 and 2 respectively.}
    \label{fig_folding}
\end{figure}

\section{Discussion}

There is a striking and not fully explored relation \cite{florescu2015sandpiles} between the sandpile model on a rectangular domain and domino tilings on it. The group playing the role in this relation is not the whole sandpile group of the rectangle, but rather a subgroup of elements symmetric with respect to horizontal and vertical reflections. It seems that with a bit of extra work our Theorem \ref{thm_divisibility} for rectangular foldings may be adapted to the case of such subgroups, and therefore give the analogous divisibility result for the numbers of domino tilings.

The category $\DD$ which was passingly introduced in Section 2 imitates by its essence a category of open subsets of some topological space. It makes sense to extend its class of morphisms, leaving the objects intact, by allowing all maps between graphs whose pullbacks on integer-valued functions commute with $\Delta^\circ.$ Morally, this allows for partial coverings, thus we may denote such a category as $\DD^{et}$ with $et$ standing for ``\'etale'', and $\mathcal{H}_A$ is still some kind of a sheaf. It is conceivable that it would be fruitful to study ``the universal extended sandpile group'' defined as the inverse limit (i.e. global sections) of $\mathcal{H}_{\mathbb{R}\slash\mathbb{Z}}$ over this category. In addition, it looks like there are traces of higher homology theory, as the preparations for the proof of Theorem \ref{thm_answer} suggest.

 Studying more general algebraic structures such as sandpile monoid (where non-recurrent states are taken into account) and sandpile magma (where negative values are allowed), as well as their real boundary extensions, is a very promising direction for further exploration. At the moment we don't know if similar functoriality properties to that of an extended sandpile group hold, but some of the most classical entities, such as the first ``sandpile movie'' (which has in part inspired the introduction \cite{lang2019harmonic} of other movies driven by harmonic potentials) suggested by Michael Creutz in \cite{creutz1991abelian} connecting the zero element in the monoid to the neutral element in the group, are naturally dwelling there as geodesic paths. Moreover, it seems to be plausible that the extended sandpile magma is realized geometrically as a tropical modification of the extended sandpile group.

In conclusion, here are a few words about the relevance of tropical view. First of all, another motivation for the  idea of extending the sandpile model has arisen from the following interpretation of the scaling limit theorems stated (in their weaker form) in \cite{kalinin2016tropical}, and with their consequences drawn in \cite{kalinin2018self}: infinitesimal behavior of the sandpile group in high-energy regime near the scaling limit is described by tropical geometry. Since the usual sandpile group is discrete, this sentence is hard to comprehend literally. However, the extension at the boundary, brings the continuous structure, and the extended sandpile group is indeed a tropical Abelian variety. More, this variety may be said to be defined over $\mathbb{Z},$ and its set of integral points is precisely the classical sandpile group. Nevertheless, the corresponding restriction homomorphisms are not defined over integers, and thus there is no functionality under domain inclusions at the level of classical sandpile groups -- this is why the presence of our mono/epi-morphisms in some special situations is so exciting.

\section*{Acknowledgements}
Supported by the Simons Foundation, grant SFI-MPS-T-Institutes-00007697, and the Ministry of Education and Science of the Republic of Bulgaria, grant DO1-239/10.12.2024. The author is grateful to the anonymous referee for the comments which helped to improve the article, and to Andrei Zabolotskii who has pointed out that the first assertion of the original formulation of Theorem \ref{thm_divisibility} was known which motivated a much stronger rephrasing of this theorem not only for rectangles but for arbitrary grid foldings.

\bibliographystyle{tufte}
\bibliography{bibliography.bib}
\end{document}